\DeclarePairedDelimiter{\ceil}{\lceil}{\rceil}
\theoremstyle{definition}
\newtheorem{definition}{Definition}
\theoremstyle{remark}
\theoremstyle{definition}
\newtheorem{theorem}{Theorem}
\newtheorem{lemma}{Lemma}
\DeclareMathOperator*{\argmin}{argmin}
\begin{document}

\title{Cournot-Nash Equilibria for Bandwidth Allocation under Base-Station Cooperation}

\author{J.S. Gomez,
        A. Vergne,
        P. Martins,
        L. Decreusefond,
        and~Wei~Chen%
\thanks{J.S. Gomez is with the Institut Mines-T\'el\'ecom, T\'el\'ecom ParisTech, CNRS, LTCI (Paris, France) and the Departement of Electronic Engineering of Tsinghua University (Beijing, China), e-mail: jean-sebastien.gomez@telecom-paristech.fr}%
\thanks{A. Vergne, P. Martins and L. Decreusefond are with the Institut Mines-T\'el\'ecom, T\'el\'ecom ParisTech, CNRS, LTCI (Paris, France), e-mails: \{anais.vergne, philippe.martins, laurent.decreusefond\}@telecom-paristech.fr} 
\thanks{Wei Chen is with the Department of Electronic Engineering of Tsinghua University (Beijing, China), e-mail: wchen@tsinghua.edu.cn}%
}


\markboth{}%
{Shell \MakeLowercase{\textit{et al.}}: Bare Demo of IEEEtran.cls for Journals}

\maketitle

\begin{abstract}
In this paper, a novel resource allocation scheme based on discrete Cournot-Nash equilibria and optimal transport theory is proposed. The originality of this framework lies in the joint optimization of downlink bandwidth allocation and cooperation between base stations. A tractable formalization is given in the form of a quadratic optimization problem. A low complexity approximate solution is derived and theoretically characterized. Simulations highlight the existence of an optimal working point, that maximizes user satisfaction ratio and network load. The impact of the network deployment on the optimum is numerically investigated, thanks to the $\beta$-Ginibre model. Indeed, base stations are assumed to be drawn according to $\beta$-Ginibre point processes. Numerical analysis shows that the network performance increases with $\beta$ going to one.
\end{abstract}

\begin{IEEEkeywords}
Cournot-Nash equilibria, Optimal transport, Downlink bandwidth resource allocation, Base station cooperation, $\beta$-Ginibre point process.
\end{IEEEkeywords}

 \ifCLASSOPTIONpeerreview
 \begin{center} \bfseries EDICS Category: 3-BBND \end{center}
 \fi

\IEEEpeerreviewmaketitle

\section{Introduction}


\IEEEPARstart{W}{ireless} networks have to tackle a major challenge: offering increasing user throughput while cost-efficiently allocating resources. Consequently, dynamic resource allocation adaptation to user traffic has been introduced in cellular networks. Strategies based on Markov processes \cite{Combes2012}, queuing theory \cite{ElSherif2014}, graph theory or game theory \cite{ZhuHan2005}  are used to finely tune bandwidth and power allocation. Nash bargaining theory has been used in this matter \cite{anchora2010}, assimilating the optimal resource allocation as a Nash equilibrium.

Another type of equilibria, the Cournot-Nash equilibria, has been defined by Antoine Augustin Cournot in 1838. He studied the situation of a spring water company duopoly. Each firm competes on the amount of their production output and decides at the same time which volume to produce in order to maximize its profit. This problem has been reformulated by Mas-Colell \cite{mas1984theorem} in probabilistic terms. Blanchet et al. were able to characterize existence and uniqueness of such equilibria in \cite{blanchet2015optimal} by taking advantage of properties of probability spaces and optimal transport theory exposed in the book of Villani \cite{villani2008optimal}. 

This paper focuses on a novel approach that jointly optimizes bandwidth allocation and cooperation between base stations. In the downlink scenario we consider, base stations are deployed according to a Poisson point process or a $\beta$-Ginibre point process. The SINR between each user and each base station is the only known information. Under these rough assumptions, we are able to solve the user bandwidth allocation and an optimal distribution of resources among cooperative base stations. Optima correspond to Cournot-Nash equilibria. We also show the link between Cournot-Nash equilibria and optimal transport theory and give a tractable mathematical formulation of the problem. A low complexity approximate optimal solution is also provided and characterized. Simulations reveal that there is an optimal working point of the network, where the user satisfaction ratio and the network load are equal. We finally compare the impact of the spatial deployment of the base stations, assuming that they are localized according to a $\beta$-Ginibre point process.

Resource allocation has been widely explored in literature. Many algorithms based on optimization have been described in \cite{xiaojun2006}. One example is the $\alpha$-fair resource allocation \cite{altman2008generalized} that gives a unified framework for optimization solution. Going one step further, optimal transport theory has been introduced in \cite{silva2013optimum} and \cite{mozaffari2016optimal}. This theory is used to shape cell boundaries and efficiently allocate power. Authors in \cite{silva2013optimum} introduce a congestion term, in order to modify the optimized solution, using the Wardrop equilibrium. Unlike optimization problems, this framework provides many mathematical tools to characterize the optimal solution. However, pure optimal transport solutions suffer from the fact that user demand for resources has to be known in order to compute the solution. Authors also limit their analysis to power allocation. On the contrary, the Cournot-Nash framework does not need an \textit{a priori} knowledge of the user spatial distribution. It can therefore solve the fair allocation of the bandwidth even in the case of outage. 
The impact of the regularity of the deployment on SINR has been studied and \cite{gomez2015} shows that the $\beta$-Ginibre point process is an eligible candidate to model cellular networks. 
To our knowledge, this is the first paper that uses the Cournot-Nash framework to tackle the joint resource allocation and cooperation problem and that investigates the impact of network deployment with a $\beta$-Ginibre point process model. 

This paper is organized as follows: in Section II, the system model is introduced and Cournot-Nash equilibria theory is applied to solve the resource allocation problem. In Section III, simulation results for different kinds of base station deployment are compared and analyzed. We conclude in Section IV.

\section{System model and problem formulation}
We consider a cellular wireless network composed of omnidirectional identical base stations drawn in the plane according to a certain point process (Poisson or $\beta$-Ginibre point process) of intensity $\lambda_{n}$. Users are drawn according to a Poisson point process of intensity $\lambda_{m}$. The state of the network is observed and assessed at a given moment. The downlink spectrum allocation problem where bandwidth is decomposed in blocks -i.e. resource block in LTE- is investigated. Using Shannon's capacity law, each user computes the number of resource blocks it wishes $N_j$, to fulfill its desired capacity $C_j$, based on the best SINR:
\begin{equation*}
N_j = \ceil[\Bigg]{ \frac{C_j}{W_{RB}\log_2(1 + \max_i (S\!I\!N\!R_{i,j}))}} ,
\end{equation*}
where $W_{RB}$ is the bandwidth of one resource block and $\ceil{x}$ is the ceil value of $x$.
In our scheme, a user can receive resource blocks from several base stations. 
 
When the number of users is large enough, the network has to share the available resource blocks among users and among base stations. The bandwidth allocation problem is thus divided into two sub-cases: 
\begin{itemize}
\item Knowing $N_j$, how many resource blocks does the network allocate to the $j$th user?
\item Knowing the number of resource blocks allocated to the $j$th user, from which base stations should they be transferred? 
\end{itemize}
The second sub-problem can be addressed with to optimal transport theory.
\begin{table}[h]
\renewcommand{\arraystretch}{1.3}
\caption{Notations}
\label{table:1}
\centering
\begin{tabular}{|c|c|}
\hline
\multicolumn{1}{|c|}{\multirow{2}{*}{$n$}}  & Number of base stations \\
\multicolumn{1}{|c|}{}  & drawn according to the chosen point process \\ \hline
\multicolumn{1}{|c|}{\multirow{2}{*}{$m$}}  & Number of users \\ 
\multicolumn{1}{|c|}{}  & drawn according to the chosen point process \\ \hline
\multicolumn{1}{|c|}{\multirow{2}{*}{$N_t$}} & Total available number \\ 
\multicolumn{1}{|c|}{}  & of resource blocks in the network \\ \hline
\multicolumn{1}{|c|}{\multirow{2}{*}{$\mu_i$}}  & Proportion of the total available \\ \multicolumn{1}{|c|}{}  & resource blocks at the $i$th base station \\ \hline
\multicolumn{1}{|c|}{\multirow{2}{*}{$\nu_j$}}  & Proportion of the total allocated  \\ \multicolumn{1}{|c|}{}  &  resource blocks at the $j$th user \\ \hline
\multicolumn{1}{|c|}{\multirow{2}{*}{$N_j$}}  & Number of resource blocks\\ \multicolumn{1}{|c|}{}  & requested by the $j$th user \\ \hline
\multicolumn{1}{|c|}{\multirow{2}{*}{$\gamma_{ij}$}}  & Proportion of resources allocated  \\
\multicolumn{1}{|c|}{}  & from the $i$th base station to the $j$th user \\ \hline
\multicolumn{1}{|c|}{\multirow{2}{*}{ SINR$_{ij}$}} & Measured SINR between \\ 
\multicolumn{1}{|c|}{}  & the $i$th base station and the $j$th user \\ \hline
\end{tabular}
\end{table}

\subsection{Optimal transport and base stations cooperation}
In 1781, Monge first described the optimal transport problem. One has to transfer sand from a pile of sand to a hole in the ground. Knowing the shape of the pile and of the hole, what are the paths taken by each grain of sand that minimize the energy used to transfer the pile to the hole? 
Assimilating $\boldsymbol\mu=\left(\mu_1,  \ldots, \mu_{n} \right)$, $\boldsymbol\nu=\left(\nu_1,\ldots, \nu_{m} \right)$ and $\boldsymbol\gamma=\left(\gamma_{1,1}, \ldots \gamma_{n,m} \right)$ as discrete probability measures respectively $\mu \in \mathcal{P}(\left[1,n\right])$, $\nu \in \mathcal{P}(\left[1,m \right])$ and $\gamma \in \mathcal{P}\left( \left[1, m\right] \times \left[1,n\right] \right)$, the resource transfer problem can be described by the discrete transport problem, where the pile is identified by $\mu$, the hole is identified by $\nu$ and the quantity transferred is given by $\gamma$. The transport cost between the $i$th and $j$th entities is defined by $c_{i,j} = \mathrm{SINR}_{ij}^{-1}$.
$\gamma$ is naturally the joint probability density of marginals $\mu$ and $\nu$.

\begin{definition}
The optimal transfer policy is given by the linear optimization problem:
\begin{equation*}
\gamma^*  = \argmin_{\gamma \in \Pi(\mu, \nu)} \sum_{(i,j)}c_{i,j}\,\gamma_{ij},
\end{equation*} 
where $\Pi(\mu, \nu)$ is the space of the joint probability measures of marginals $\mu$ and $\nu$.
\end{definition}
In other words, $\gamma^*$ solves the transportation problem between $\mu$ and $\nu$, and verifies: 
\begin{align*}
&\forall 1 \! \leq \! i \! \leq \! n, \; \sum^{m}_{j=1}\!\gamma^*_{ij} = \mu_i, \\
&\forall 1 \! \leq \! j \! \leq \! m,  \; \sum^{n}_{i=1}\!\gamma^*_{ij} = \nu_j, \\
&\forall (i,j), \; \gamma_{i,j} \geq 0.
\end{align*}
Collapsing $\boldsymbol\gamma$ and $\mathbf{c}$ into a vector form, the previous optimization problem can be rewritten in this form:
\begin{equation*}
\boldsymbol\gamma^* = \argmin_{\boldsymbol\gamma} {}^t \mathbf{c} \cdot \boldsymbol\gamma,
\end{equation*}
such that:
\begin{align*}
&T_{n} \boldsymbol\gamma = \boldsymbol\mu, \\
&T_{m} \boldsymbol\gamma = \boldsymbol\nu, \\
&\forall  1 \! \leq \! l  \! \leq \! nm, \; \gamma_l \geq 0,
\end{align*}
where: 
\begin{align*}
&T_{n} = \mathbf{1}_{1,m} \otimes \mathbf{Id}_{n}, \\
&T_{m} = \mathbf{Id}_{m} \otimes \mathbf{1}_{1,n}.
\end{align*}
$\otimes$ denotes the Kronecker product, $\mathbf{1}_{n,m}$ is the matrix of ones with $n$ lines and $m$ columns and $\mathbf{Id}_n$ is the square identity matrix on $\mathbb{R}^n$.

Since  $\gamma^*$ is a probability measure, this linear programming problem takes place on a compact set. Optimal solutions therefore exist. However, before applying optimal transport theory, we must first obtain the probability measure $\nu$ representing the user demand. This problem is solved with Cournot-Nash equilibria.

\subsection{Exact Cournot-Nash equilibria}
\begin{definition}
Using previous notations, the Cournot-Nash equilibria are the joint density probabilities $\gamma^*$ such that their second marginal $\nu^*$ verifies:
\begin{equation*}
\nu^* =  \argmin_{\nu \in \mathcal{P}\left([1,m] \right)} W_c(\mu,\nu)\!+\!s(\nu),
\end{equation*}
where:
\begin{equation*}
W_c(\mu,\nu)\!=\!\inf_{\gamma \in \Pi(\mu, \nu)} \!\mathbf{c} \cdot \boldsymbol\gamma,
\end{equation*}
\begin{equation*}
s(\nu) = {}^t \left(\boldsymbol\nu - \frac{\mathbf{N}}{N_t} \right) \cdot \left(\boldsymbol\nu - \frac{\mathbf{N}}{N_t}\right),
\end{equation*}
and $\mathbf{N} = \left(N_1, \ldots, N_{m} \right)$.
\end{definition}
This definition of the Cournot-Nash equilibria is the one introduced by Blanchet et al. in \cite{blanchet2015optimal}. The first term, $W_c$, solves the optimal transport problem between the probabilities $\mu$ and $\nu$. It is also known as the Wasserstein distance between the probability measures $\mu$ and $\nu$. The second term $s(\nu)$ is the fairness term, it only depends on the probability measure $\nu$.
This Cournot-Nash problem can be reformulated into a quadratic optimization problem.
\begin{definition}
 Cournot-Nash equilibria are solutions of the following quadratic optimization problem :
\begin{equation*}
\boldsymbol\gamma^\mathbf{*} = \argmin_{\boldsymbol\gamma} {}^t\boldsymbol\gamma H \boldsymbol\gamma + {}^t \mathbf{L}\boldsymbol\gamma,
\end{equation*}
such that:
\begin{align*}
&T_{n} \boldsymbol\gamma = \boldsymbol\mu, \\
&T_{m} \boldsymbol\gamma \leq \mathbf{N}/N_t,\\
&\forall  1 \! \leq \! l  \! \leq \! nm, \; \gamma_l \geq 0,
\end{align*}
where:
\begin{align*}
&H = {}^t T_{m} T_{m} = \mathbf{Id_{m}} \otimes \mathbf{1}_{n,n}, \\
&\mathbf{L} =  \mathbf{c} - 2   T_{m} \frac{\mathbf{N}}{N_t}.
\end{align*}
\end{definition}
Since $H$ is a positive semi-definite matrix and $\gamma^*$ is a probability measure, the boundedness of the optimization domain is ensured. The existence of a solution is hence guaranteed. Such formulation is easily implementable in a common solver and allows numerical simulations on networks composed of up to half a thousand users in a reasonable amount of time as it will be shown in Section \ref{Sec_sim}.
\subsection{Approximate Cournot-Nash equilibria}
Thanks to the separability of the Cournot-Nash objective function, one can interpret this Cournot-Nash equilibria as a superposition of the user allocation problem and the resource-transfer problem. This superposition structure is highlighted by the algebraic structure of $H$, due to the Kronecker product. Indeed, the $\mathbf{Id_m}$ factor gives the allocation of resources and the $\mathbf{1}_{n,n}$ factor gives for each user the optimal resource transfer. The quadratic formalization however cannot be fully separated due to the cost term $\mathbf{c}$.
Considering that most of the resources are allocated to the link of the minimal cost $c_{min,j}$ defined by $c_{min,j}\!=\!\min_{ i} c_{i,j}  $, the classical approach of the problem consists in:
\begin{itemize}
\item first, solving the resource allocation problem at a user level,
\item second, routing the allocated resources among cooperating base stations to attain the final user. 
\end{itemize}
\subsubsection{Resource allocation algorithm}
Therefore, the simplified quadratic problem function is derived:
\begin{equation*}
\boldsymbol\nu^* = \argmin_{\boldsymbol\nu} {}^t\boldsymbol\nu H \boldsymbol\nu + {}^t \mathbf{L}\boldsymbol\nu,
\end{equation*}
such that:
\begin{equation*}
{}^t\mathbf{1}_{1,m} \cdot \boldsymbol\nu = 1 \; \mathrm{and} \; \nu_j \geq 0, 
\end{equation*}
with:
\begin{equation*}
H = \mathbf{Id}_{m}, \; \mathrm{and} \;
\mathbf{L} =  \mathbf{c}_{min} - 2\frac{\mathbf{N}}{N_t}.
\end{equation*}
\begin{theorem}
\label{th1}
The solution $\boldsymbol\nu^{*}$ of the above simplified optimization problem is unique and is of the form:
\begin{equation*}
\boldsymbol\nu^* = \boldsymbol\nu^0 - \frac{{}^t\mathbf{u}(\boldsymbol\nu^0 - \mathbf{M})}{(m\!-\!k)}  \mathbf{u},
\end{equation*}
where $k$ is the number of zero coordinates of $\boldsymbol\nu^{*}$, $\mathbf{u} = \mathbf{1}_{m-k,1}$ $\boldsymbol\nu^0 = - \mathbf{L}/2$ and $\mathbf{M} = \mathbf{u} /(m\!-\!k)$.
\end{theorem}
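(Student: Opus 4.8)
The plan is to recognise the simplified quadratic program as an orthogonal projection onto the probability simplex and then read off the solution from its Karush--Kuhn--Tucker (KKT) conditions. Since here $H=\mathbf{Id}_m$, the objective equals ${}^t\boldsymbol\nu\,\boldsymbol\nu + {}^t\mathbf{L}\boldsymbol\nu = \|\boldsymbol\nu-\boldsymbol\nu^0\|^2 - \|\boldsymbol\nu^0\|^2$ with $\boldsymbol\nu^0 = -\mathbf{L}/2$, so the problem is exactly the Euclidean projection of $\boldsymbol\nu^0$ onto the simplex $\Delta_m=\{\boldsymbol\nu : {}^t\mathbf{1}_{1,m}\,\boldsymbol\nu=1,\ \nu_j\geq 0\}$. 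As in the exact case, $\Delta_m$ is convex and compact so a minimiser exists; since $H=\mathbf{Id}_m$ is positive definite the objective is strictly convex, which promotes existence to uniqueness.

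I would then form the Lagrangian with a multiplier $\alpha$ for the equality constraint and multipliers $\beta_j\geq 0$ for the constraints $\nu_j\geq 0$, with complementary slackness $\beta_j\nu_j=0$. Stationarity reads $2(\nu_j^*-\nu_j^0)-\alpha-\beta_j=0$ for every $j$. Let $S$ denote the support of $\boldsymbol\nu^*$, of cardinality $m-k$. For $j\in S$ one has $\beta_j=0$, hence $\nu_j^* = \nu_j^0 + \tfrac{\alpha}{2}$; for $j\notin S$ one has $\nu_j^*=0$, the associated dual-feasibility condition being $\beta_j=-2\nu_j^0-\alpha\geq 0$.

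To finish, I would impose the normalisation on the support: summing $\nu_j^*=\nu_j^0+\tfrac{\alpha}{2}$ over $j\in S$ and using $\sum_{j\in S}\nu_j^*=1$ gives $(m-k)\tfrac{\alpha}{2}=1-{}^t\mathbf{u}\,\boldsymbol\nu^0$, where $\mathbf{u}=\mathbf{1}_{m-k,1}$ is the all-ones vector on $S$; hence $\tfrac{\alpha}{2}=\tfrac{1}{m-k}\big(1-{}^t\mathbf{u}\,\boldsymbol\nu^0\big)$. Substituting back and using ${}^t\mathbf{u}\,\mathbf{M}=1$ (since $\mathbf{M}=\mathbf{u}/(m-k)$) yields $\boldsymbol\nu^* = \boldsymbol\nu^0 - \tfrac{{}^t\mathbf{u}(\boldsymbol\nu^0-\mathbf{M})}{m-k}\,\mathbf{u}$ on $S$ and $\boldsymbol\nu^*=\mathbf{0}$ off $S$, which is the claimed form. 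The main obstacle is that the active set $S$ --- equivalently the integer $k$ --- is not known a priori, so the formula is implicit in its own support: a fully rigorous argument must exhibit an admissible $S$, i.e. verify that the vector produced by the formula is non-negative on $S$ and that $-2\nu_j^0-\alpha\geq 0$ for every $j\notin S$. By strict convexity any such $S$ produces the unique global optimum, and one is obtained constructively by ordering the coordinates of $\boldsymbol\nu^0$ and discarding the smallest ones until the common shift keeps the remaining entries non-negative.
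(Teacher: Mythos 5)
Your proposal is correct, and it rests on the same key observation as the paper's proof: since $H=\mathbf{Id}_m$, completing the square turns the objective into $\|\boldsymbol\nu-\boldsymbol\nu^0\|^2$ up to an additive constant, so the problem is the Euclidean projection of $\boldsymbol\nu^0=-\mathbf{L}/2$ onto the probability simplex. Where you diverge is in how the closed form is extracted. The paper argues geometrically: it projects $\boldsymbol\nu^0$ orthogonally onto the hyperplane $\{{}^t\mathbf{1}\,\boldsymbol\nu=1\}$, and if negative coordinates appear it zeroes them out and re-projects on the reduced face, iterating until feasibility --- this is exactly what becomes Algorithm 1. You instead write the KKT conditions: stationarity gives $\nu_j^*=\nu_j^0+\alpha/2$ on the support and the normalisation pins down $\alpha$, reproducing the stated formula after noting ${}^t\mathbf{u}\,\mathbf{M}=1$. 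Your route buys two things the paper's proof leaves implicit. First, uniqueness falls out of strict convexity of the objective on a convex feasible set, which is cleaner than the paper's one-line appeal to ``the orthogonal projection of the center of a hypersphere.'' Second, you make explicit the dual-feasibility condition $-2\nu_j^0-\alpha\geq 0$ that must hold on the zeroed coordinates --- the certificate that the guessed active set is actually the right one --- whereas the paper's iterative scheme simply stops when all coordinates are non-negative without verifying optimality of the resulting support. Conversely, the paper's geometric iteration is constructive in a way that directly yields the implementation, while your sort-and-threshold remark plays that role in your version. Both are complete; yours is the more careful optimality argument, the paper's is the more algorithmic one.
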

\begin{proof}
The theorem is proven in Appendix \ref{Appendice1}. 
\end{proof}

\subsubsection{Cooperation}
Solving the resource transfer problem is equivalent to solve the optimal transport problem:
\begin{equation*}
\boldsymbol\gamma^\mathbf{*} = \argmin_{\boldsymbol\gamma} {}^t \mathbf{c} \cdot \boldsymbol\gamma,
\end{equation*}
such that:
\begin{align*}
T_{n} \boldsymbol\gamma &= \boldsymbol\mu, \\
T_{m} \boldsymbol\gamma &= \boldsymbol\nu^*.
\end{align*}

\subsubsection{Complexity}
\begin{theorem}
An approximate optimum can be found in polynomial time.
\end{theorem}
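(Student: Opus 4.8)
The plan is to read the claim as a statement about the two-stage construction just described---first solve the simplified quadratic allocation problem, then solve the resulting transportation problem---and to bound the cost of each stage separately, so that the total is polynomial in $n$ and $m$.

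For the allocation stage, Theorem~\ref{th1} already gives a closed form for $\boldsymbol\nu^*$ once the number $k$ of its vanishing coordinates is known, so the only algorithmic question is how to identify the active set $Z=\{\,j:\nu_j^*=0\,\}$. I would do this by a standard active-set sweep: initialise $Z=\emptyset$, evaluate the formula of Theorem~\ref{th1} on the indices outside $Z$, move into $Z$ every index whose computed value is negative, and repeat on the reduced problem until all computed values are non-negative. Since the objective $ {}^t\boldsymbol\nu\,\boldsymbol\nu+{}^t\mathbf{L}\,\boldsymbol\nu$ is strictly convex on the simplex, its minimiser is unique and characterised by the KKT system; a short argument then shows that an index discarded at some sweep is genuinely inactive at the optimum, so each sweep strictly shrinks the working set and at most $m$ sweeps occur, each costing $O(m)$ arithmetic operations (one inner product, one scaling, one truncation). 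This gives $O(m^2)$ for the whole allocation step.

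For the cooperation stage, $\boldsymbol\nu^*$ is now fixed and the remaining problem $\boldsymbol\gamma^*=\argmin_{\boldsymbol\gamma}{}^t\mathbf{c}\cdot\boldsymbol\gamma$ under $T_{n}\boldsymbol\gamma=\boldsymbol\mu$, $T_{m}\boldsymbol\gamma=\boldsymbol\nu^*$, $\boldsymbol\gamma\geq 0$ is a finite transportation problem: $nm$ variables, $n+m$ balance constraints, and total supply equal to total demand since both $\boldsymbol\mu$ and $\boldsymbol\nu^*$ are probability vectors. This is a min-cost flow instance on a complete bipartite graph, solvable in strongly polynomial time (by a cost-scaling or network-simplex algorithm), and in any case in polynomial time by an interior-point linear-programming solver. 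Summing the two bounds, the overall procedure terminates in time polynomial in $n$ and $m$.

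The main obstacle I anticipate is the correctness---not the speed---of the first stage: one must be sure that the greedy identification of $Z$ does not overshoot, i.e. that a coordinate dropped because the current formula renders it negative never wants to re-enter once the other coordinates are frozen, and conversely that no coordinate left in the support should actually be zero. This is precisely where the trivial Hessian $H=\mathbf{Id}_m$ helps: the subproblem is just the Euclidean projection of $\boldsymbol\nu^0=-\mathbf{L}/2$ onto the probability simplex, and the monotonicity of that projection makes the sweep consistent and its output the unique optimum. Once that point is settled, the rest is a direct appeal to classical polynomial-time guarantees for linear and convex quadratic programming.
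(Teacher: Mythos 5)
Your proposal is correct and follows essentially the same route as the paper's own proof: the allocation stage is bounded by at most $m$ projections of cost $O(m)$ each, giving $O(m^2)$, and the cooperation stage is a transportation linear program solvable in polynomial time. You are in fact somewhat more careful than the paper, which asserts the $O(m^2)$ bound and the correctness of the iterated projection without the KKT/active-set justification you sketch.
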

\begin{proof}
In order to compute approximate Cournot-Nash equilibria, one must first solve the allocation problem and then the optimal transport problem. Since the allocation problem involves at most $m$ projections, its complexity is in $\mathcal{O}(m^2)$. The optimal transport part is a linear programming optimization problem and is known to be solved in polynomial time.  
\end{proof}

\subsubsection{Algorithm}
Algorithm \ref{alg1} is derived from the proof of Theorem \ref{th1}. The optimal allocation $\nu^{\mathbf{*}}$ is first computed, then the optimal transport $\gamma^*$ between the two discrete probability measures $\mu$ and $\nu^*$ is derived by linear programming. The algorithm is centralized and iterative. The while-loop converges as the dimensions of the projective space is strictly decreasing and bounded by one. 

\begin{algorithm}
 \KwData{$\mathbf{c}, \mathbf{c_{min}}, \mathbf{N}, N_t, T_m, T_n$.}
 \KwResult{$\boldsymbol\gamma^{\mathbf{*}}$.}
 Initalize (k, $\boldsymbol\nu^0$, $\mathbf{M}$, $\mathbf{u}$, $\boldsymbol\nu^*$)\;
 \While{ $\exists \nu^{*}_j < 0$ }{
	Project ($\boldsymbol\nu^0$, $\mathbf{M}$, $\mathbf{u}$, $\boldsymbol\nu^*$) on the space of strictly positive coordinates of $\boldsymbol\nu^*$\;
	Let $k$ be the number of negative coordinates of $\boldsymbol\nu^*$\;	
	Project $\boldsymbol\nu^*$ on the new hyperplane defined by (k, $\boldsymbol\nu^0$, $\mathbf{M}$, $\mathbf{u}$, $\boldsymbol\nu^*$)\;
 }
 $\boldsymbol\gamma^{\mathbf{*}}= $ LinearProg$(\boldsymbol\mu, \boldsymbol\nu^{*}, \mathbf{c}, T_n, T_m)$\;
  \caption{Approximate solution algorithm}
\label{alg1}
\end{algorithm}
\subsection{Cournot-Nash equilibria and system optimum}
On a system level, three indicators are analyzed in function of the number of users in the network:
\begin{itemize}
\item The user satisfaction ratio: $$r_u = \frac{N_t}{m} \sum_{j=1}^m \frac{\nu_j}{N_j}.$$ It is the mean ratio between the number of resource blocks allocated by the network to each user and the number of resources requested by each user.
\item The network load: $$r_n = \sum_{j=1}^m \nu_j.$$ It is the proportion of total available resources used in the network. 
\item The cooperation proportion: $$r_c = \frac{1}{m} \sum_{j=1}^m \mathbb{1}\!\left( \left(\sum_{i=1}^n  \mathbb{1}(\gamma_{i,j} \neq 0) \right) \neq 1 \right),$$ where $\mathbb{1}$ is the indicator function. It is the proportion of users that receive resource blocks from multiple base stations. 
\end{itemize}
We define the optimum network working point as the intersection of user satisfaction curve and the network load curve. In the next section, this point is identified for both the exact and the approximate Cournot-Nash solutions. Its relative position is investigated under several network deployment schemes.
\section{Numerical analysis}
\label{Sec_sim}

\subsection{Simulation parameters}

\begin{table}[h]
\renewcommand{\arraystretch}{1.3}
\caption{Simulation parameters}
\label{table:2}
\centering
\begin{tabular}{|c|c|}
\hline
$\lambda_n$ & 10 per unit square\\ \hline
$\lambda_m$ & from 10 to 500 per unit square\\ \hline
$RB_{max}$  & 10\\ \hline
$C$ & 500 kB/s \\ \hline
$W_{RB}$ & 180 kHz\\ \hline
Resource blocks per base station & 100 \\ \hline
Path-loss exponent & 3 \\ \hline
Shadowing & 10 dB \\ \hline
\end{tabular}
\end{table}

Simulation parameters are summarized in Table \ref{table:2}. We assume that each base station reuses all the resource blocks. All antennas are omnidirectional and emit at the same power level. Base station locations are drawn according to a Poisson point process of intensity $\lambda_n$. Users locations are drawn according to a Poisson point process with intensity $\lambda_{m}$. Each user asks for the same capacity $C$.
The number of resource blocks per user is limited to $RB_{max}$. Therefore the number of resource blocks requested per user is given by:
\begin{equation*}
N_j\!=\!\max\left(\ceil[\Bigg]{\frac{C}{ W_{RB}\log_2(1 + \max_j (S\!I\!N\!R_{i,j})}}\!,\!RB_{max} \right).
\end{equation*}

\subsection{Exact vs. approximate Cournot-Nash solution}

\begin{figure}[h]
\centering
\includegraphics[width=88mm]{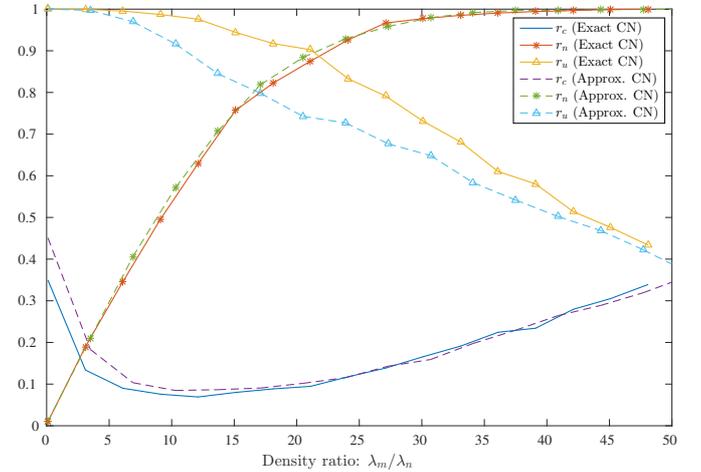}
\caption{Exact Cournot-Nash vs. approximate Cournot-Nash.}
\label{Figure:1}
\end{figure}

In Figure \ref{Figure:1}, $r_n$, $r_u$ and $r_c$ are plotted for the exact Cournot-Nash equilibria in function of the density ratio $\lambda_m / \lambda_n$ in solid lines. Using Matlab \texttt{quadprog} function, an iteration for the maximum number of users takes about 3 seconds to compute on a late 2014, 8 cores CPU laptop computer. This figure was produced with 500 iterations. The optimum working point of the network is reached for a density ratio of 21 and for a user satisfaction ratio (or a network load) of 89\%. The cooperation proportion reaches a minimum in the neighborhood of the optimum working point, with about 10\% of the users under base station cooperation.  

A comparison between the approximate solutions and the exact solutions is also given in Figure \ref{Figure:1}. The optimal transport was solved with the \texttt{intlinprog} function. One iteration for a density of 500 users and 10 base stations per unit square is computed in about 350 ms. The network optimum working point is reached for a density ratio of 17 and for a user satisfaction ratio (or a network load) of 80\%. The approximate algorithm thus proves to be a pessimistic bound of the exact Cournot-Nash solution, that can be used for an under-estimate of the network performance. It is a good trade-off between computational complexity and precision, since computation is about ten times faster than the exact algorithm whereas the error made is only of 10\% on the indicator. The cooperation proportion and the network load behaviors are similar to the exact curves.

\subsection{Impact of network deployment on the optimum network working point}

We consider networks composed of antennas drawn according to a $\beta$-Ginibre or Poisson point process with the same intensity $\lambda_{n}$. The $\beta$-Ginibre point process is a repulsive point process, which regularity can be set with the parameter $\beta$. A $\beta$-Ginibre point process is obtained after a thinning of a Ginibre point process. Each point of the Ginibre point process is independently selected with a probability $\beta$. If $\beta$ goes to $0$, the point process tends to a Poisson point process (corresponds to a uniform network deployment). If $\beta = 1$, then the point process corresponds to a Ginibre point process (corresponds to a regular network deployment). A way to simulate a Ginibre point process (and therefore a $\beta$-Ginibre point process) is given in \cite{decreusefond2013ginibre}.

\begin{figure*}%
\centering
\begin{subfigure}{1\columnwidth}
\includegraphics[width=\columnwidth]{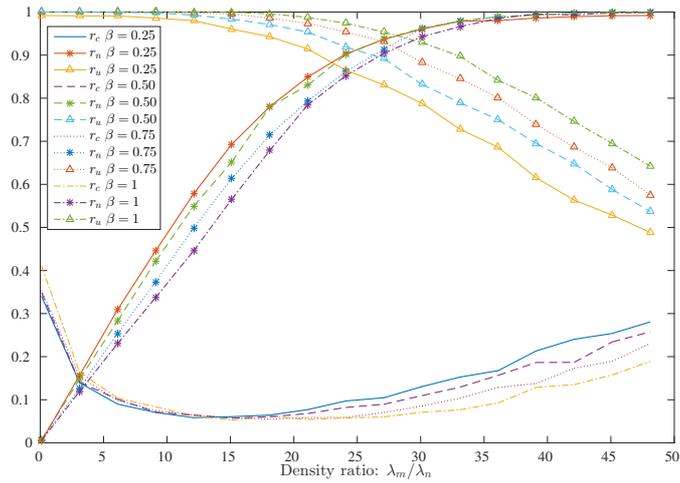}%
\caption{Exact Cournot-Nash.}%
\label{figure:2}%
\end{subfigure}\hfill%
\begin{subfigure}{1\columnwidth}
\includegraphics[width=\columnwidth]{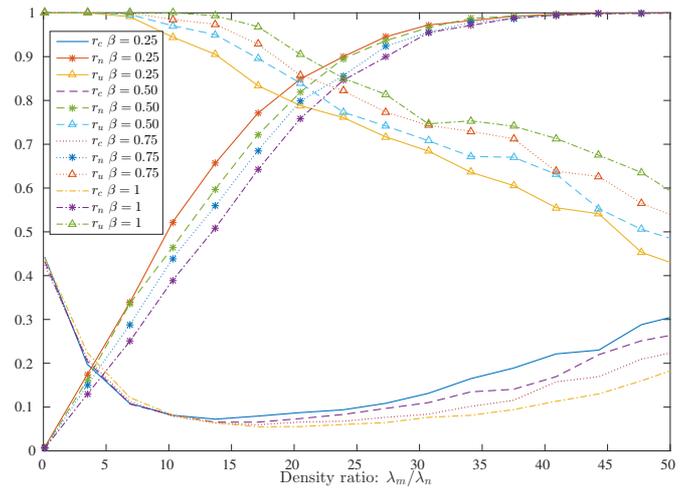}%
\caption{Approximate Cournot-Nash.}%
\label{figure:3}%
\end{subfigure}\hfill%
\caption{Equilibria obtained for $\beta=$0.25, 0.50, 0.75 and 1.}
\label{fig:23}
\end{figure*}

In Figures \ref{figure:2} and \ref{figure:3}, the impact of regularity is studied. Curves for Poisson  and $\beta$-Ginibre point processes are plotted for the exact and the approximate Cournot-Nash equilibria. Four $\beta$-Ginibre point processes are considered with four values of $\beta$:  0.25, 0.50, 0.75 and 1. Results are given in Table \ref{table:3}. 

\begin{table}[h]
\centering
\caption{Optimum network working points in function of $\beta$.}
\label{table:3}
\begin{tabular}{c|cc|cc|}
\cline{2-5}
                                     & \multicolumn{2}{c|}{Exact CN}                      & \multicolumn{2}{c|}{Approx. CN}                    \\ \hline
\multicolumn{1}{|c|}{Point Process}  & $\nicefrac{\lambda_m}{\lambda_n}$ & $r_u$ or $r_n$ & $\nicefrac{\lambda_m}{\lambda_n}$ & $r_u$ or $r_n$ \\ \hline
\multicolumn{1}{|c|}{Poisson}        & 21                                & 88\%           & 17                                & 80\%           \\ \hline
\multicolumn{1}{|c|}{$\beta = 0.25$} & 22.5                              & 88\%           & 19                                & 82\%           \\ \hline
\multicolumn{1}{|c|}{$\beta = 0.50$} & 25                                & 90\%           & 21                                & 84\%           \\ \hline
\multicolumn{1}{|c|}{$\beta = 0.75$} & 27.5                              & 92\%           & 22.5                              & 85\%           \\ \hline
\multicolumn{1}{|c|}{$\beta = 1$}    & 29                                & 94\%           & 24                                & 87\%           \\ \hline
\end{tabular}
\end{table}

For both exact and approximate Cournot-Nash equilibria, the density ratio and the user satisfaction of the optimum working point jointly increase with the value of $\beta$. This can be explained as the overall SINR quality in the network increases with the regularity of the deployment \cite{Nakata:2014:SSM:2643750.2643953}.

\section{Conclusion}
A novel resource allocation scheme under cooperation based on Cournot-Nash equilibria has been introduced. An exact as well as an approximate fast computable solution have been provided. Numerical analysis has shown the existence of an optimum network working point, where network load and user satisfaction ratio are jointly maximized. The cooperation proportion, is minimum in the neighborhood of the optimum working point. Impact of the network deployment has been investigated. The more regular the network is, the better the performance is. 

\appendices

\section{Proof of theorem 1}
\label{Appendice1}
\begin{lemma}
The simplified optimization problem can be transformed into a hypersphere equation.
\label{lemma_1}
\end{lemma}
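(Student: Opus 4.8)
\section*{Proof plan for Lemma \ref{lemma_1}}

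The plan is to kill the linear term in the objective by completing the square; this rewrites the quadratic form as a constant plus a squared Euclidean distance, and the level sets of a squared distance are hyperspheres, which is the claimed reformulation. First I would substitute $H=\mathbf{Id}_m$, so that the objective reads ${}^t\boldsymbol\nu H\boldsymbol\nu + {}^t\mathbf{L}\boldsymbol\nu = \|\boldsymbol\nu\|^2 + {}^t\mathbf{L}\boldsymbol\nu$, where $\|\cdot\|$ is the Euclidean norm on $\mathbb{R}^m$. Completing the square gives
\begin{equation*}
\|\boldsymbol\nu\|^2 + {}^t\mathbf{L}\boldsymbol\nu
 = \left\| \boldsymbol\nu + \tfrac{1}{2}\mathbf{L} \right\|^2 - \tfrac{1}{4}\|\mathbf{L}\|^2
 = \|\boldsymbol\nu - \boldsymbol\nu^0\|^2 - \tfrac{1}{4}\|\mathbf{L}\|^2,
\end{equation*}
with $\boldsymbol\nu^0 = -\tfrac{1}{2}\mathbf{L}$, exactly the vector appearing in Theorem \ref{th1}. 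Since $-\tfrac{1}{4}\|\mathbf{L}\|^2$ does not depend on $\boldsymbol\nu$, minimizing the objective over the feasible set $\{\boldsymbol\nu:\ {}^t\mathbf{1}_{1,m}\cdot\boldsymbol\nu = 1,\ \nu_j\geq 0\}$ is equivalent to minimizing $\|\boldsymbol\nu - \boldsymbol\nu^0\|^2$ over the same set.

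Next I would record the geometric consequence. For every $\rho\geq 0$ the set $\{\boldsymbol\nu:\ \|\boldsymbol\nu-\boldsymbol\nu^0\|^2 = \rho^2\}$ is the hypersphere centered at $\boldsymbol\nu^0$ with radius $\rho$; equivalently, the iso-surfaces of the simplified objective are precisely such hyperspheres. Hence the simplified program amounts to finding the smallest $\rho$ for which this hypersphere still meets the probability simplex, and the minimizer $\boldsymbol\nu^*$ is the unique point of the simplex lying on that minimal hypersphere, i.e.\ the Euclidean projection of $\boldsymbol\nu^0$ onto the simplex. This ``hypersphere equation'' form is exactly the object from which the explicit projection formula of Theorem \ref{th1} is then derived, since on the affine hyperplane ${}^t\mathbf{1}_{1,m}\cdot\boldsymbol\nu = 1$ the nearest point to $\boldsymbol\nu^0$ is the orthogonal projection $\boldsymbol\nu^0 - \frac{{}^t\mathbf{u}(\boldsymbol\nu^0-\mathbf{M})}{m-k}\mathbf{u}$.

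The step that needs care is not the algebra but the role of the inequality constraints $\nu_j\geq 0$. On the hyperplane alone a single orthogonal projection solves the problem, but the resulting candidate may have negative coordinates, in which case the minimal hypersphere is tangent to a lower-dimensional face of the simplex rather than to the full hyperplane; this is precisely what forces the integer $k$ (the number of vanishing coordinates) and the iterated projection onto the active face in the proof of Theorem \ref{th1}. For the present lemma, however, I would only need the reformulation itself, so I would close by noting that the feasible set is a nonempty compact convex polytope, which guarantees that the Euclidean projection of $\boldsymbol\nu^0$ — hence the minimal hypersphere meeting the feasible set — exists and is unique.
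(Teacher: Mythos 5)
Your proposal is correct and follows essentially the same route as the paper: both complete the square (the paper by adding the constant $\tfrac{1}{4}\,{}^t\mathbf{L}\mathbf{L}$ to the objective, you by subtracting it), identify the center $\boldsymbol\nu^0=-\mathbf{L}/2$, and reinterpret the program as finding the minimal-radius hypersphere meeting the feasible set, i.e.\ the Euclidean projection of $\boldsymbol\nu^0$. Your write-up is in fact slightly more explicit than the paper's, particularly in flagging that the nonnegativity constraints are deferred to the proof of Theorem~\ref{th1}.
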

\begin{proof}
The simplified optimization problem can be written in the following form:
\begin{equation*}
\boldsymbol\nu^{\mathbf{*}} = \argmin_{\boldsymbol\nu} {}^t\boldsymbol\nu H \boldsymbol\nu + {}^t \mathbf{L}\boldsymbol\nu + \frac{1}{4} {}^t \mathbf{L} \mathbf{L},
\end{equation*}
such that:
\begin{equation*}
{}^t\mathbf{1}_{1,m} \cdot \boldsymbol\nu = 1 \; \mathrm{and} \; \nu_j \geq 0, 
\end{equation*}
We denote by $\mathcal{C}$, the convex hull defined by the constraints of this optimization problem.
The added constant does not modify the optima and therefore this problem is equivalent to the simplified optimization problem. Furthermore, the utility function is the equation of an hypersphere of center $\boldsymbol\nu^0 = - \mathbf{L} / 2$ and the objective value is its radius.
\end{proof}
Thanks to Lemma \ref{lemma_1}, the optimum $\boldsymbol\nu^{\mathbf{*}}$ is given by the intersection of the minimal radius hypersphere of center $\boldsymbol\nu^0 = - \mathbf{L} / 2$ and of $\mathcal{C}$. 
Let $\mathcal{H}$ be the hyperplane defined by:
\begin{equation*}
\mathcal{H} = \left\lbrace \mathbf{x} \in \mathbb{R}^{m} \mid {}^t \mathbf{1}_{m,1}  \mathbf{x} = 1 \right\rbrace.
\end{equation*}
$\mathcal{C}$ is included in the hyperplane $\mathcal{H}$. 
Let $\boldsymbol\nu^*$ be the orthogonal projection of $\boldsymbol\nu^0$ on $\mathcal{H}$. Two cases can be distinguished:
\begin{enumerate}
\item $\boldsymbol\nu^{*}$ has no strictly negative coordinates.
\item $\boldsymbol\nu^{*}$ has some strictly negative coordinates.
\end{enumerate}
In the first case, $\boldsymbol\nu^{*}$ is the tangent point between $\mathcal{C}$ and the hypersphere. Since $\boldsymbol\nu^{*}$ is the orthogonal projection of $\boldsymbol\nu^0$ on $\mathcal{C}$, it also minimizes the radius of the hypersphere that intersect $\mathcal{C}$. The optimum is given by:
\begin{equation*}
\boldsymbol\nu^{*} = \boldsymbol\nu^0 - \frac{ {}^t \mathbf{u} (\boldsymbol\nu^0  - \mathbf{M}) }{m}  \mathbf{u},
\end{equation*}
where $\mathbf{M} = \mathbf{1}_{m,1} / m$ and $\mathbf{u} = \mathbf{1}_{m,1}$. If all coordinates are positive, then the optimum has been reached.\\ 
In the second case (indexing from $1$ to $m-k$ the strictly positive coordinates, where $k$ is the number of negative coordinates), the positivity constraints $m-k+1$ to $m$ are saturated. $\boldsymbol\nu^*$ is in $\mathcal{H}$ but outside $\mathcal{C}$. Therefore, $\nu_{m-k+1}^* \ldots \nu_{m}^*$ are set to zero and $\nu_{1}^* \ldots \nu_{m-k}^*$ have to be computed. $\mathbf{M}, \mathbf{u}$ and $\boldsymbol\nu^0$ are first projected on the non-null subspace:
\begin{align*}
\forall& 1  \leq j \leq m\!-\!k, \;  \mathrm{M}_{j} = 1/(m\!-\!k),  \\
\forall& m-k+1  \leq j \leq m, \; \nu_{j}^0 = 0, \mathrm{M}_j = 0, \mathrm{u}_j = 0.
\end{align*}
Then the optimum is calculated in the non-null subspace:
\begin{equation*}
\boldsymbol\nu^* = \boldsymbol\nu^0 - \frac{{}^t\mathbf{u}(\boldsymbol\nu^0 - \mathbf{M})}{(m\!-\!k)}  \mathbf{u}.
\end{equation*}
In this case, the previous operations must be repeated until all coordinates are positive.

Uniqueness of the solution is ensured by the fact that the optimal solution is the orthogonal projection  of the center of an hypersphere. 

\bibliographystyle{IEEEtran}
\bibliography{IEEEabrv,biblio}
\end{document}